
\documentclass[conference]{IEEEtran}
\IEEEoverridecommandlockouts

%


%
\usepackage{ifpdf}

%

%
\ifCLASSINFOpdf
  \usepackage[pdftex]{graphicx}
\else
\fi
%
%

%
\usepackage{amsmath}
\usepackage{amsfonts}
\usepackage{amssymb}
\usepackage{amsthm}
%

%
\usepackage{algorithmic}

%
\usepackage{array}
\usepackage{float}
\usepackage{url}

\newtheorem{theorem}{Theorem}
\newtheorem{definition}{Definition}
\newtheorem{example}{Example}


\hyphenation{op-tical net-works semi-conduc-tor}

\begin{document}
%
\title{Lattice Construction $C^\star$ from Self-Dual Codes}


\author{\IEEEauthorblockN{Maiara F. Bollauf \IEEEauthorrefmark{1}, Sueli I. R.  Costa\IEEEauthorrefmark{1}, and 
Ram Zamir \IEEEauthorrefmark{3}}
\vspace{0.2cm}
\IEEEauthorblockA{\IEEEauthorrefmark{1} Institute of Mathematics, Statistic and Scientific Computing \\
University of Campinas, S\~ao Paulo\\
13083-859, Brazil \\ Email: bollauf@ieee.org, sueli@ime.unicamp.br}
\vspace{0.2cm}
\IEEEauthorblockA{\IEEEauthorrefmark{2} Deptartment Electrical Engineering-Systems \\
Tel Aviv University, Tel Aviv, Israel \\ Email: zamir@eng.tau.ac.il}}
\maketitle

\begin{abstract} 

	Construction $C^\star$ was recently introduced as a generalization of the multilevel Construction C (or Forney's code-formula), such that the coded levels may be dependent. Both constructions do not produce a lattice in general, hence the central idea of this paper is to present a $3-$level lattice Construction $C^\star$ scheme that admits an efficient nearest-neighborhood decoding. In order to achieve this objective, we choose coupled codes for levels $1$ and $3,$ and set the second level code $\mathcal{C}_2$ as an independent linear binary self-dual code, which is known to have a rich mathematical structure among families of linear codes. Our main result states a necessary and sufficient condition for this construction to generate a lattice. We then present examples of efficient lattices and also non-lattice constellations with good packing properties. \\

\end{abstract}

{\small \textbf{\textit{Index terms}---}} Multilevel construction, Construction C, Construction $C^\star,$ self-dual codes, sphere packing.


%
\IEEEpeerreviewmaketitle

\vspace{0.1cm}

\section{Introduction}


 

	A lattice is a well studied mathematical structure due to an extensive list of applications, including its efficient packing properties. The sphere packing problem has known solutions only for dimensions $2,3,8$ and $24,$ \cite{Hales:2005, Cohn2016,Viazovska2016} and all of them can be reached by lattices. For other dimensions, there are strong beliefs that the best possible packing density can be achieved by lattices.
	
	One way of producing lattice constellations is to use linear codes in the so called Constructions A, B, and D \cite{conwaysloane}. There are also other interesting constructions that generate more general constellations (lattices and non-lattices) with prominent applications in quantization and coded modulation, such as Constructions C \cite{forney1} and $C^\star$~\cite{bzc2019}. The advantage of working with such constructions is mainly the translation of characteristics from the linear code over a finite field to an infinite constellation in the $n-$dimensional real space. 
	
	While the condition for Construction C to be a lattice is elegant and directly related to Construction D \cite{kositoggier}, the lattice condition for its generalization, i.e. Construction $C^\star,$ cannot be related to any other previous lattice construction \cite{bzc2019}. Thus, one proposal of this work is to investigate families of codes which make Construction $C^\star$ always a lattice and the result points out to the role of self-dual codes.
	
	In coding theory, self-dual codes are of a peculiar importance as they represent the best known error correcting codes for transmission or data storage \cite{Huffman:2005}, when one is interested in transmitting a large number of messages with a large minimum weight, in order to correct maximum number of errors. Their properties and relations with results from group theory, combinatorics and lattices are well known. Self-dual codes underlying Construction A are explored in several works \cite{bachoc96, Nebe2006, Sloane:1979} regarding the association of these codes to unimodular lattices.
	
	We are inspired by the $3-$level Construction $C^\star$ of the Leech lattice presented in \cite{bzc2019}, which considered coupled codes for levels $1$ and $3,$ while the second level was the $[24,12,8]-$Golay code. We generalize this idea for any even dimension by fixing the choice of the second level code $\mathcal{C}_2$ to be a self-dual code and our main result states a necessary and sufficient condition for such construction to produce a lattice. This theory also arises as a promising approach for the open problem of decoding Construction $C^\star,$ by using an extension of the works from Forney \cite{forney2} and Amrani et al. \cite{amrani} to any $3-$level lattice Construction $C^\star.$
	
	
	We present alternative constructions for the $E_8$ lattice and known packings in dimension $32$ and $40.$ Interesting non-lattice constellations (with a code $\mathcal{C}_2$ which is not self-dual), including a special one in dimension 4 that achieves the same packing density of the lattice $D_4,$ are presented.
	
	This paper is organized as follows: Section~\ref{Sec:2} introduces some relevant notions about lattices, Construction $C^\star,$ and codes. Section~\ref{Sec:3-level} presents a general way of producing lattices via a $3-$level Construction $C^\star$ by using self-dual codes in the second level. Section~\ref{Sec:4} is devoted to examples of lattice packings. Section~\ref{Sec:5} describes non-lattice constellations which have good packing properties, including one that presents the same packing density as the densest known lattice in $\mathbb{R}^4.$ Finally, in Section~\ref{Sec:Conc} conclusions and perspectives of future work are drawn.

\vspace{0.2cm}

\section{Background on lattices and codes} \label{Sec:2}
	
	In this section, we recall the definition of Construction $C^\star$ and the condition for it to be a lattice. We also point out some properties of self-orthogonal and self-dual codes. 
	
%
%

\begin{definition}(Lattice) A lattice $\Lambda \subset \mathbb{R}^{N}$ is a set of integer linear combinations of independent vectors $v_{1}, v_{2}, \dots, v_{n} \in \mathbb{R}^{N}.$ 
\end{definition}

	We say that a lattice is full rank if $N=n,$ which is the case of lattices explored through this paper. The volume $vol(\Lambda)$ of a full rank lattice is the absolute value of the determinant of a matrix which has its columns as the generator vectors $v_{1}, v_{2}, \dots, v_{n}.$
	
\begin{definition}(\textit{Packing radius and packing density}) The packing radius $r_{pack}(\Lambda)$ of a lattice $\Lambda \subset \mathbb{R}^n$ is half of the minimum distance between lattice points and the packing density $\Delta(\Lambda)$ is the fraction of the space that is covered by balls $B(\lambda, r_{pack}(\Lambda))$ of radius $r_{pack}(\Lambda),$ centered at a lattice point $\lambda \in \Lambda,$ i.e.,
\begin{equation}
\Delta(\Lambda) = \dfrac{vol (B(0,r_{pack}(\Lambda))}{vol(\Lambda)} = \dfrac{V_n~ r_{pack}^n}{vol(\Lambda)},
\end{equation}
where $V_n$ refers to the volume of the unit ball in $\mathbb{R}^n.$ 
\end{definition}

	The packing density is an important measure to compare lattices. However, for increasing dimensions, this value tends to zero and analogies are hard to perform. In that case, instead of analyzing packing densities it is common to compare Hermite constants.
	
\begin{definition}(\textit{Hermite constant}) The Hermite constant of a lattice $\Lambda \subset \mathbb{R}^n$ is given by
\begin{equation}
\gamma_n(\Lambda) = 4 \left( \dfrac{\Delta(\Lambda)}{V_n} \right)^{2/n} = \dfrac{4 r_{pack}^2}{vol(\Lambda)^{2/n}} = \dfrac{d_{\min}^2(\Lambda)}{vol(\Lambda)^{2/n}},
\end{equation}
where $V_n$ refers to the volume of the unit ball in $\mathbb{R}^n.$ 
\end{definition}

	The Hermite constant $\gamma_n$ measures the highest attainable coding gain of an $n-$dimensional lattice.
	
	Besides the well known Constructions A and D, that produce lattice constellations from linear codes, another interesting construction is the so called Construction C or construction by code-formula \cite{forney1}.
	
\begin{definition}(\textit{Construction C})  Consider $L$ binary codes $\mathcal{C}_{1}, \dots, \mathcal{C}_{L} \subseteq \mathbb{F}_{2}^{n},$ not necessarily nested or linear. Then we define an infinite constellation $\Gamma_{C}$ in $\mathbb{R}^{n}$ that is called Construction C as:
	\begin{equation} \label{eqC}
	\Gamma_{C}:=\mathcal{C}_{1}+2\mathcal{C}_{2}+ \dots + 2^{L-1}\mathcal{C}_{L}+2^{L}\mathbb{Z}^{n}.
	\end{equation} 
\end{definition}

	A generalization of Construction C was introduced in \cite{bzc2018, bzc2019} and denoted by Construction $C^\star.$ It was inspired by bit-interleaved coded modulation (BICM) and asymptotically, it was demonstrated its superior packing efficiency when compared to Construction C.
	
	The main feature of Construction $C^\star$ that differs from Construction C is the fact that the levels are inter-coded, i.e., they are dependent.

\begin{definition} \label{constrcstar} (\textit{Construction $C^{\star}$}) Let $\mathcal{C} \subseteq \mathbb{F}_{2}^{nL}$ be a binary code. Then Construction $C^{\star}$ is defined as 
{\small \begin{eqnarray}
 \Gamma_{C^{\star}}& := & \{c_{1}+2c_{2}+ \dots + 2^{L-1}c_{L}+2^{L}z: (c_{1}, c_{2}, \dots, c_{L}) \in \mathcal{C}, \nonumber \\
& & c_{i} \in \mathbb{F}_{2}^{n}, i=1, \dots, L, z \in \mathbb{Z}^{n}\}.  
\end{eqnarray}}
\end{definition}

	Note that Construction C coincides with Construction $C^\star$ when $\mathcal{C}=\mathcal{C}_1 \times \dots \times \mathcal{C}_n$ and we observe that both constructions in general do not produce a lattice. A condition that will assure the laticeness of Construction $C^\star$ will be presented next.
	
\begin{definition} (\textit{Projection codes}) \label{subcodes} Let $c=(c_1,...,c_L)$ be a partition of a codeword $c = (c_{11}, \dots, c_{1n},....,c_{L1}, \dots, c_{Ln}) \in \mathcal{C} \subseteq \mathbb{F}_2^{nL}$ into length$-n$ subvectors  $c_i = (c_{i1},....,c_{in}),$  $i=1,\dots,L.$ Then, a projection code $\mathcal{C}_i$ consists of all subvectors $c_{i}$ that appear as we scan through all possible codewords $c \in \mathcal{C}.$ 
\end{definition}

	In what follows, we denote by $+$ the real addition and by $\oplus$ the sum in $\mathbb{F}_{2},$ i.e., $x \oplus y=(x+y) \bmod \ 2.$

\begin{definition} (\textit{Antiprojection}) The antiprojection $\mathcal
{S}_{i}(c_1,\dots,$ $c_{i-1}, c_{i+1}, \dots, c_{L})$ consists of all vectors $c_{i} \in \mathcal{C}_{i},$ $i=1,\dots, L$ that appear as we scan through all possible codewords $c \in \mathcal{C},$ while keeping $c_{1}, \dots, c_{i-1}, c_{i+1}, \dots, c_{L}$ fixed:
\begin{align}
\mathcal{S}_{i}(c_1,...,c_{i-1}, c_{i+1},...,c_{L}) & = \{c_{i} \in \mathcal{C}_{i}: \nonumber \\
& \hspace{-0.5cm} (c_{1}, \dots, \underbrace{c_{i}}_{\text{\makebox[0pt]{i-th position} }}, \dots, c_{L}) \in \mathcal{C}\}.
\end{align}
\end{definition}

	In \cite{bzc2019}, there are two statements that guarantee the latticeness of Construction $C^\star$ and here we recall one of them, due to its simplicity and straightforward relation with the results of this paper. We start by the definition of Schur product.
	
\begin{definition} (\textit{Schur product}) For $x=(x_{1}, \dots, x_{n})$ and $y=(y_{1}, \dots, y_{n})$ both in $\mathbb{F}_{2}^{n},$ we define $x \ast y = (x_{1}y_{1}, \dots, x_{n}y_{n}).$
\end{definition}

	Consider $\psi: \mathbb{F}_2^{n} \rightarrow \mathbb{R}^{n}$ as the natural embedding. Then, for $x, y \in \mathbb{F}_{2}^{n},$ it is valid that
\begin{equation}\label{eqss}
\psi(x)+\psi(y)=\psi(x \oplus y)+2\psi(x \ast y).
\end{equation} 
In order to simplify, we abuse the notation, writing Eq. \eqref{eqss} as	
\begin{equation} \label{sum}
x+y=x \oplus y + 2(x \ast y).
\end{equation}
	
	A chain $\mathcal{C}_1 \subseteq \mathcal{C}_2 \subseteq \mathbb{F}_2^n$ is said to be closed under Schur product if for any $c_1, \tilde{c}_1 \in \mathcal{C}_1,$ the Schur product $c_1 \ast \tilde{c}_1 \in \mathcal{C}_2.$

\begin{theorem} \cite{bzc2019} (\textit{A sufficient lattice condition for $\Gamma_{C^\star}$})  \label{coro4} If $\mathcal{C} \subseteq \mathbb{F}_{2}^{nL}$ is a linear binary code with projection codes $\mathcal{C}_{1},\mathcal{C}_{2}, \dots, \mathcal{C}_{L}$ such that $\mathcal{C}_{1} \subseteq \mathcal{S}_{2}(0,\dots,0) \subseteq \mathcal{C}_{2} \subseteq \dots \subseteq \mathcal{C}_{L-1} \subseteq \mathcal{S}_{L}(0, \dots,0) \subseteq \mathcal{C}_{L} \subseteq \mathbb{F}_{2}^{n}$ and the chain $\mathcal{C}_{i-1} \subseteq \mathcal{S}_{i}(0,\dots,0)$ is closed under the Schur product for all $i=2,\dots, L,$ then $\Gamma_{C^{\star}}$ is a lattice.
\end{theorem}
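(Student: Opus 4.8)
The plan is to prove that $\Gamma_{C^\star}$ is a full-rank discrete additive subgroup of $\mathbb{R}^n$. Discreteness is immediate since $\Gamma_{C^\star}\subseteq\mathbb{Z}^n$, and $0\in\Gamma_{C^\star}$ because $\mathcal{C}$ is linear; moreover $2^L\mathbb{Z}^n\subseteq\Gamma_{C^\star}$ and $\Gamma_{C^\star}+2^L\mathbb{Z}^n=\Gamma_{C^\star}$, so $\Gamma_{C^\star}$ is a union of cosets of $2^L\mathbb{Z}^n$. Hence it suffices to establish closure under real addition: then $\Gamma_{C^\star}/2^L\mathbb{Z}^n$ is a subset of the finite group $(\mathbb{Z}/2^L\mathbb{Z})^n$ that contains $0$ and is closed under addition, which forces it to be a subgroup, and consequently $\Gamma_{C^\star}$ is a lattice. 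So I would fix two points $x=\sum_{i=1}^L 2^{i-1}c_i+2^Lz$ and $x'=\sum_{i=1}^L 2^{i-1}c_i'+2^Lz'$ of $\Gamma_{C^\star}$ and aim to write $x+x'$ in the same form.

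The second step is to carry out the addition as a ripple-carry using \eqref{sum} level by level. Processing levels from $i=1$ upward with initial carry $q_0=0$, I would reduce the integer vector $c_i+c_i'+q_{i-1}\in\{0,1,2,3\}^n$ into a binary digit and a binary carry, obtaining the full-adder relations
\begin{equation}
t_i=c_i\oplus c_i'\oplus q_{i-1},\qquad q_i=(c_i\ast c_i')\oplus(c_i\ast q_{i-1})\oplus(c_i'\ast q_{i-1}),
\end{equation}
with $t_i,q_i\in\mathbb{F}_2^n$. A routine telescoping of the carries then yields
\begin{equation}
x+x'=\sum_{i=1}^L 2^{i-1}t_i+2^L\left(q_L+z+z'\right),
\end{equation}
so that $x+x'\in\Gamma_{C^\star}$ provided the digit word $(t_1,\dots,t_L)$ is a codeword of $\mathcal{C}$.

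The third step reduces codeword membership to a statement about the carries. Since $\mathcal{C}$ is linear, $(c_1\oplus c_1',\dots,c_L\oplus c_L')\in\mathcal{C}$, and because $t_i=(c_i\oplus c_i')\oplus q_{i-1}$ with $q_0=0$, we have $(t_1,\dots,t_L)=(c_1\oplus c_1',\dots,c_L\oplus c_L')\oplus(0,q_1,\dots,q_{L-1})$. Hence $(t_1,\dots,t_L)\in\mathcal{C}$ if and only if the carry word $(0,q_1,\dots,q_{L-1})\in\mathcal{C}$. Now $\mathcal{S}_{i+1}(0,\dots,0)$ is a linear subspace of $\mathbb{F}_2^n$ (inherited from the linearity of $\mathcal{C}$), and $q_i\in\mathcal{S}_{i+1}(0,\dots,0)$ means exactly that the word with $q_i$ in position $i+1$ and zeros elsewhere belongs to $\mathcal{C}$; so linearity reduces everything to proving $q_i\in\mathcal{S}_{i+1}(0,\dots,0)$ for every $i=1,\dots,L-1$.

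This last claim is the crux, and I would establish it by induction on $i$. For $i=1$ we have $q_1=c_1\ast c_1'$ with $c_1,c_1'\in\mathcal{C}_1$, so $q_1\in\mathcal{S}_2(0,\dots,0)$ by the hypothesis that the chain $\mathcal{C}_1\subseteq\mathcal{S}_2(0,\dots,0)$ is closed under the Schur product. For the inductive step, the hypothesis $q_{i-1}\in\mathcal{S}_i(0,\dots,0)$ together with the nesting $\mathcal{S}_i(0,\dots,0)\subseteq\mathcal{C}_i$ gives $q_{i-1}\in\mathcal{C}_i$; since also $c_i,c_i'\in\mathcal{C}_i$, each of the three Schur products $c_i\ast c_i'$, $c_i\ast q_{i-1}$ and $c_i'\ast q_{i-1}$ lies in $\mathcal{S}_{i+1}(0,\dots,0)$ by closure of the chain $\mathcal{C}_i\subseteq\mathcal{S}_{i+1}(0,\dots,0)$ under the Schur product, and their $\oplus$-sum $q_i$ remains there by linearity. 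The delicate interplay here is precisely what I expect to be the main obstacle: the nesting conditions are what allow an incoming carry to be treated as an element of $\mathcal{C}_i$, while the Schur-product conditions are what push the resulting products up into the next antiprojection, so both families of hypotheses are indispensable for closing the induction.
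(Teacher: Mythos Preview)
The paper does not supply a proof of this theorem; it is quoted from \cite{bzc2019} and used as a black box in the proof of Theorem~\ref{self3}. Consequently there is no ``paper's own proof'' to compare against here.

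That said, your argument is correct and is the natural one. The ripple-carry decomposition with full-adder digits $t_i=c_i\oplus c_i'\oplus q_{i-1}$ and carries $q_i=(c_i\ast c_i')\oplus(c_i\ast q_{i-1})\oplus(c_i'\ast q_{i-1})$ is exactly what the identity~\eqref{sum} is set up for, and your reduction of $(t_1,\dots,t_L)\in\mathcal{C}$ to $(0,q_1,\dots,q_{L-1})\in\mathcal{C}$ via linearity is clean. The induction showing $q_i\in\mathcal{S}_{i+1}(0,\dots,0)$ uses precisely the two hypotheses in the way they must be used: the nesting $\mathcal{S}_i(0,\dots,0)\subseteq\mathcal{C}_i$ to promote the incoming carry to an element of $\mathcal{C}_i$, and Schur-closure of $\mathcal{C}_i\subseteq\mathcal{S}_{i+1}(0,\dots,0)$ to place each product in the next antiprojection. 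Your observation that $\mathcal{S}_{i+1}(0,\dots,0)$ is itself linear (so that the $\oplus$-sum of the three products stays there, and so that the individual contributions $(0,\dots,0,q_i,0,\dots,0)$ may be summed inside $\mathcal{C}$) is the one point that deserves emphasis, and you handle it correctly. The finite-group reduction modulo $2^L\mathbb{Z}^n$ also quietly takes care of closure under negation, which is needed for the subgroup conclusion.
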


	In this paper we set $L=3$ for Construction $C^\star$ and analyze the case where the second level code $\mathcal{C}_2$ is a self-orthogonal linear code in $\mathbb{F}_2^n,$ independent of the other two levels. In $\mathbb{F}_2,$ the standard inner product of $c=(c_1, c_2, \dots, c_n)$ and $\tilde{c}=(\tilde{c}_1, \tilde{c}_2, \dots, \tilde{c}_n)$ is defined as $\langle c, \tilde{c} \rangle = \sum_{i=1}^{n} c_i \tilde{c}_i \bmod 2$ and the orthogonal set $\mathcal{C}^\perp$ of a code $\mathcal{C} \subseteq \mathbb{F}_2^n$ is also defined as the set $\mathcal{C}^\perp=\{c \in \mathbb{F}_2^n: \langle c,\tilde{c} \rangle = 0,~ \forall \tilde{c} \in \mathcal{C}\}.$
	
\begin{definition}\label{defself} (\textit{Self-orthogonal and self-dual codes}) A code $\mathcal{C}$ is self-orthogonal if $\mathcal{C} \subset \mathcal{C}^{\perp}$ and  it is self-dual if $\mathcal{C} = \mathcal{C}^{\perp}.$
\end{definition}


	A code $\mathcal{C}$ is self-orthogonal if and only if $\langle c, \tilde{c} \rangle = 0,$ for all $c, \tilde{c} \in \mathcal{C}.$ Each codeword in a  self-orthogonal code has even Hamming weight and $(1,\dots, 1) \in \mathcal{C}^\perp.$ Indeed, let $c \in \mathcal{C},$ which is a self-orthogonal code, then $\langle c,c \rangle = 0,$ and it means that the Hamming weight of $c,$ i.e. $\omega(c)$, is always even for all $c \in \mathcal{C}.$ Also, $(1,\dots, 1) \in \mathcal{C}^{\perp}$ due to the fact that $\langle c, (1,\dots, 1) \rangle = 0,$ for all $c \in \mathcal{C}$ and $\omega(c)$ is even.


	



	A characterization of self-dual codes is given by \cite[p. 8]{ebeling}\cite{MacWilliams1977}: a $[n,k,d]-$ linear code $\mathcal{C}$ is self-dual if and only if $\mathcal{C} \subset \mathcal{C}^\perp$ and $k=\tfrac{n}{2}.$
	

\begin{example} The Reed-Muller code $\mathcal{RM}(1,4),$ which is a $[16,5,8]-$binary linear code is self-orthogonal, while the $[8,4,4]-$extended Hamming code and the $[24,12,8]-$extended Golay code are both examples of self-dual codes. 
\end{example}

\vspace{0.2cm}

\section{General lattices via 3-level Construction $C^\star$} \label{Sec:3-level}

	Inspired by the Leech lattice construction via $C^\star$ presented in \cite{bzc2018}, we aim to describe a more general $3-$level lattice Construction $C^\star$ by fixing the level (projection) codes as
	
\begin{itemize}
\item $\mathcal{C}_1 = \{(0,\dots,0), (1, \dots, 1)\}\subset \mathbb{F}_2^n,$ which is the repetition code;
\item $\mathcal{C}_2 \subset \mathbb{F}_2^n$ as a convenient code we are going to explore later;
\item $\mathcal{C}_{3} = \tilde{\mathcal{C}}_{3} \cup \overline{\mathcal{C}}_{3} = \mathbb{F}_{2}^{n},$ and we require that if $c_1=(0,\dots,0)$ then $c_3 \in \tilde{\mathcal{C}}_3=\{(x_{1}, \dots, x_{n}) \in \mathbb{F}_{2}^{n}:  \sum_{i=1}^{n} x_{i} \equiv 0 \mod 2\}$ and if $c_1=(1,\dots,1)$ then $c_3 \in \overline{\mathcal{C}}_3=\{(y_{1}, \dots, y_{n}) \in \mathbb{F}_{2}^{n}: \sum_{i=1}^{n} y_{i} \equiv 1 \mod 2\}.$
\end{itemize}
In other words, the main code $\mathcal{C} \subseteq \mathbb{F}_{2}^{3n}$ is given by
\begin{align}\label{eq3level}
\mathcal{C}= & \{(\underbrace{0,\dots, 0}_{\in \mathcal{C}_1}, \underbrace{a_{1}, \dots, a_{n}}_{\in \mathcal{C}_{2}}, \underbrace{x_{1}, \dots, x_{n}}_{\in \tilde{\mathcal{C}}_{3}}), \nonumber \\
& (\underbrace{1, \dots, 1}_{\in \mathcal{C}_1}, \underbrace{a_{1}, \dots, a_{n}}_{\in \mathcal{C}_{2}}, \underbrace{y_{1}, \dots, y_{n}}_{\in \overline{\mathcal{C}}_{3}})\}.
\end{align}

	One can notice that the dependence between levels is crucial in the definition of the main code $\mathcal{C} \subseteq \mathbb{F}_{2}^{3n},$ as in Eq. \eqref{eq3level}. 
We can then define a constellation $\Gamma_{\mathcal{C}^\star}$ as the $3-$level Construction $C^{\star}$ given by
\begin{equation}\label{lattice_cstar}
\Gamma_{\mathcal{C}^\star}=\{c_{1}+2c_{2}+4c_{3}+8z: (c_{1}, c_{2}, c_{3}) \in \mathcal{C}, z \in \mathbb{Z}^{n}\}.
\end{equation}
	
	The choice of $\mathcal{C}_2$ in Eq. \eqref{eq3level} is directly related to Theorem \ref{coro4}, as we are interested in constructing lattice constellations. 
	
\begin{theorem}\label{self3} (\textit{Lattice Construction $C^\star$ with self-orthogonal codes}) Let $\mathcal{C} \subset \mathbb{F}_{2}^{3n}$ be a linear code according to Eq. \eqref{eq3level}. The resulting constellation $\Gamma_{\mathcal{C}^\star}$ (Eq. \eqref{lattice_cstar}) obtained via Construction $C^\star$ from the code $\mathcal{C}$ is a lattice if and only if $\mathcal{C}_2 \subseteq \mathbb{F}_2^n$ is a self-orthogonal code that contains $(1,\dots, 1).$ 
\end{theorem}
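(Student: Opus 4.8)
The plan is to reduce the lattice property to closure under real addition and then track binary carries through the three levels. Since every element of $\Gamma_{\mathcal{C}^\star}$ lies in $\mathbb{Z}^n$, the constellation contains $8\mathbb{Z}^n$ (take $c_1=c_2=c_3=0$) while meeting only finitely many cosets of $8\mathbb{Z}^n$ (the part $c_1+2c_2+4c_3$ takes finitely many values). Hence $\Gamma_{\mathcal{C}^\star}$ is a discrete subgroup of full rank---that is, a lattice---if and only if it is closed under addition; closure under negation is then automatic, because a finite submonoid of the group $\mathbb{R}^n/8\mathbb{Z}^n$ is a subgroup. So I would fix $\gamma = c_1+2c_2+4c_3+8z$ and $\gamma'=c_1'+2c_2'+4c_3'+8z'$ in $\Gamma_{\mathcal{C}^\star}$ and decide exactly when $\gamma+\gamma' \in \Gamma_{\mathcal{C}^\star}$.

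Next I would compute the three lowest binary levels of $\gamma+\gamma'$ by repeatedly applying \eqref{sum}. Adding level $1$ produces the bit $\hat c_1 = c_1 \oplus c_1'$ and a carry $u_1 = c_1 \ast c_1'$ into level $2$; adding the three binary vectors $c_2, c_2', u_1$ produces $\hat c_2 = c_2\oplus c_2' \oplus u_1$ and a carry $u_2 = (c_2\ast c_2')\oplus(c_2\ast u_1)\oplus(c_2'\ast u_1)$ into level $3$; finally $\hat c_3 = c_3\oplus c_3'\oplus u_2$. Since the high parts $8z,8z'$ cannot affect the three lowest levels, $\gamma+\gamma' \in \Gamma_{\mathcal{C}^\star}$ if and only if $(\hat c_1,\hat c_2,\hat c_3) \in \mathcal{C}$, and it remains to test this triple against the description \eqref{eq3level}.

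I would then check the three defining requirements of \eqref{eq3level} separately. The condition $\hat c_1 \in \mathcal{C}_1$ is automatic, as $\mathcal{C}_1=\{(0,\dots,0),(1,\dots,1)\}$ is a group. For $\hat c_2\in\mathcal{C}_2$, linearity of $\mathcal{C}_2$ leaves only the carry $u_1=c_1\ast c_1'$ as an obstruction; this vanishes unless $c_1=c_1'=(1,\dots,1)$, in which case $u_1=(1,\dots,1)$, so the requirement holds for all pairs exactly when $(1,\dots,1)\in\mathcal{C}_2$. For the level-$3$ parity requirement I would use that $\mathrm{parity}(a\ast b)=\langle a,b\rangle$ together with the additivity of parity under $\oplus$: one finds that $\mathrm{parity}(c_3\oplus c_3')$ already supplies precisely the parity that $\hat c_1$ demands (both equal $[c_1=(1,\dots,1)]\oplus[c_1'=(1,\dots,1)]$), so the level-$3$ condition collapses to $\mathrm{parity}(u_2)=0$. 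Expanding, $\mathrm{parity}(u_2)=\langle c_2,c_2'\rangle\oplus\langle c_2,u_1\rangle\oplus\langle c_2',u_1\rangle$, which for every pair reduces to $\langle c_2,c_2'\rangle$ once $\mathcal{C}_2$ is self-orthogonal (its codewords then having even weight kills the $u_1$ terms).

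Finally I would assemble the equivalence. For necessity, specializing to pairs with $c_1=c_1'=(0,\dots,0)$ forces $\langle c_2,c_2'\rangle=0$ for all $c_2,c_2'\in\mathcal{C}_2$, i.e. self-orthogonality, while specializing to $c_1=c_1'=(1,\dots,1)$ with $c_2=c_2'$ forces $(1,\dots,1)\in\mathcal{C}_2$; these two extractions are independent. For sufficiency, assuming both conditions, the case analysis above shows $(\hat c_1,\hat c_2,\hat c_3)\in\mathcal{C}$ for every pair, so $\Gamma_{\mathcal{C}^\star}$ is closed under addition. Alternatively, sufficiency follows from Theorem~\ref{coro4}: computing $\mathcal{S}_2(0,\dots,0)=\mathcal{C}_2$ and $\mathcal{S}_3(0,\dots,0)=\tilde{\mathcal{C}}_3$, the chain $\mathcal{C}_1\subseteq\mathcal{C}_2\subseteq\tilde{\mathcal{C}}_3\subseteq\mathbb{F}_2^n$ holds and is closed under the Schur product precisely when $(1,\dots,1)\in\mathcal{C}_2$ and $c_2\ast\tilde c_2$ has even weight for all $c_2,\tilde c_2\in\mathcal{C}_2$, the latter being self-orthogonality. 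The main obstacle is the two-level carry bookkeeping of the second step---in particular the three-input carry $u_2$ and the verification that $\mathrm{parity}(c_3\oplus c_3')$ exactly matches the parity forced by $\hat c_1$---since everything downstream hinges on the clean collapse to $\mathrm{parity}(u_2)=0$.
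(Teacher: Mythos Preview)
Your proposal is correct and follows essentially the same route as the paper: reduce the lattice property to closure under real addition, propagate carries through the three levels (your $u_2=(c_2\ast c_2')\oplus(c_2\ast u_1)\oplus(c_2'\ast u_1)$ is the same vector the paper writes as $(c_1\ast c_1')\ast(c_2\oplus c_2')\oplus(c_2\ast c_2')$), and for sufficiency invoke Theorem~\ref{coro4} via the chain $\mathcal{C}_1\subseteq\mathcal{C}_2\subseteq\tilde{\mathcal{C}}_3$. Your necessity argument is in fact more economical---a single specialization $c_1=c_1'=0$ already forces $\langle c_2,c_2'\rangle=0$ for all pairs, whereas the paper runs through all four $(c_1,c_1')$ cases---and your explicit remark that closure under negation comes for free (finite submonoid of $\mathbb{R}^n/8\mathbb{Z}^n$) fills a gap the paper leaves implicit.
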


\begin{proof}$(\Rightarrow)$ Suppose that $\Gamma_{C^\star}$ constructed from $\mathcal{C}\subseteq \mathbb{F}_{2}^{3n}$ is a lattice. Then, given $x, y \in \Gamma_{C^\star}$ it is true that $x+y \in \Gamma_{C^\star}.$ We can write
\begin{align*}
x=c_1+2c_2+4c_3+8z \\
y= c_1' + 2c_2' + 4c_3' + 8z'
\end{align*}
and $x+y \in \Gamma_{C^\star}$ implies that the vector
\begin{align}\label{eqcode}
& ~ ~ ~ ~ ~ ~ ~ ~ ~ (c_1 \oplus c_1', c_2 \oplus c_2' \oplus (c_1 \ast c_1'), \nonumber \\
&c_3 \oplus c_3' \oplus ((c_1 \ast c_1')\ast (c_2 \oplus c_2') \oplus (c_2 \ast c_2')) \in \mathcal{C}
\end{align}
and in particular, $c_2 \oplus c_2' \oplus (c_1 \ast c_1') \in \mathcal{C}_2.$ Due to linearity, $c_2 \oplus c_2' \in \mathcal{C}_2$ and for $c_1=c_1'=(1,\dots, 1),$ we must have that $(1,\dots, 1) \in \mathcal{C}_2.$

	It remains to demonstrate the $\mathcal{C}_2$ is self-orthogonal. There are only four possible choices for $c_1$ and $c_1',$ which we discuss case by case below:
	
\begin{itemize}
\item $c_1=c_1'=(0,\dots,0):$ from Eq. \eqref{eqcode} we have that $(0,\dots,0, c_2 \oplus c_2', c_3 \oplus c_3' \oplus c_2 \ast c_2') \in \mathcal{C},$ where by construction $c_3 \oplus c_3'$ has even weight, so it is straightforward to conclude that the sum of the coordinates of $c_2 \ast c_2'$ is equal to zero and $\langle c_2, c_2' \rangle =0.$
\item $c_1=(1,\dots,1)$ and $c_1'=(0,\dots,0):$ from Eq. \eqref{eqcode} we have that $(1,\dots,1, c_2 \oplus c_2', c_3 \oplus c_3' \oplus c_2 \ast c_2') \in \mathcal{C},$ where by construction the coordinates of $c_3$ sum one modulo 2 and the coordinates of $c_3'$ sum zero modulo 2, thus the only possibility is that the sum of $c_2 \ast c_2'$ is equal to zero and $\langle c_2, c_2' \rangle =0.$ An analogous argument applies to the case where $c_1=(0,\dots,0)$ and $c_1'=(1,\dots,1).$
\item $c_1=c_1'=(1,\dots,1):$ from Eq. \eqref{eqcode} we have that $(0,\dots,0, c_2 \oplus c_2' \oplus (1,\dots,1), c_3 \oplus c_3' \oplus (c_2 \oplus c_2') \oplus c_2 \ast c_2') \in \mathcal{C},$ where in this case both coordinates of $c_3$ and $c_3'$ sum one modulo 2, hence $c_3 \oplus c_3'$ has even weight and consequently also $(c_2 \oplus c_2') \oplus c_2 \ast c_2'$ must have even weight. We need to prove that the coordinates of $c_2 \ast c_2'$ sum zero modulo 2. Assume that $c_2 \oplus c_2'$ has odd weight, by contradiction (because it will force $c_2 \ast c_2'$ to have odd weight as well). Due to the linearity of $\mathcal{C}_2,$  $c_2 \oplus c_2'=\tilde{c}_2 \in \mathcal{C}_2.$ Then, we consider in Eq.~\eqref{eqcode}, $c_2=c_2'=\tilde{c}_2,$ which yields:
\begin{equation}\label{eqcontradiction}
(0,\dots,0, 1, \dots, 1, c_3 \oplus c_3' \oplus \tilde{c}_2 \oplus \tilde{c}_2 \oplus \tilde{c}_2 \ast \tilde{c}_2) \in \mathcal{C},
\end{equation}
and $(\tilde{c}_2 \oplus \tilde{c}_2) \oplus (\tilde{c}_2 \ast \tilde{c}_2) = \tilde{c}_2,$ what makes the third coordinate to have odd weight. Thus, the element written in Eq.~\eqref{eqcontradiction} does not belong to the code $\mathcal{C}$ and we have a contradiction. Therefore, both $c_2 \oplus c_2'$ and $c_2 \ast c_2'$ must have even weight, what implies that $\langle c_2, c_2' \rangle =0.$ 
\end{itemize}
	
	We can then conclude that $\mathcal{C}_2$ is self-orthogonal.	\\
	
\noindent $(\Leftarrow)$ To assure the latticeness condition from Theorem \ref{coro4} to hold one needs to first verify that
\begin{equation}
\mathcal{C}_{1} \subseteq \mathcal{S}_{2}(0,\dots,0) \subseteq \mathcal{C}_{2} \subseteq \mathcal{S}_{3}(0, \dots,0) \subseteq \mathcal{C}_{3},
\end{equation}
and due to the structure of $\mathcal{C} \subseteq \mathbb{F}_{2}^{3n}$ in Eq. \eqref{eq3level} we have that $\mathcal{S}_{2}(0,\dots,0) = \mathcal{C}_2$ and $\mathcal{S}_{3}(0, \dots,0) = \tilde{\mathcal{C}}_{3}.$ By hypothesis, $(1,\dots, 1) \in \mathcal{C}_2,$ what allow us to conclude that $\mathcal{C}_{1} \subseteq \mathcal{S}_{2}(0,\dots,0)$ and this nesting is clearly closed under Schur product.

	Since $\mathcal{C}_2$ is self-orthogonal, all codewords have even weight and $\mathcal{C}_{2} \subseteq \tilde{\mathcal{C}}_{3}.$ It remains to show that this nesting is closed under Schur product, i.e., given any $c_2, c_2' \in \mathcal{C}_2,$ the sum of all coordinates of the vector defined by $c_2 \ast c_2'$ should be zero modulo 2. Observe that the Schur product is the coordinate-by-coordinate product and the action of summing all components of the resulting Schur product vector is the same as $\langle  c_2, c_2' \rangle.$ Thus, we want to prove that $\langle  c_2, c_2' \rangle =0 \bmod 2,$ which is true since $\mathcal{C}_2$ is self-orthogonal.
\end{proof}	

	One can observe that for self-dual codes, the condition required by Theorem \ref{self3} is automatically satisfied, because $\mathcal{C}=\mathcal{C}^\perp$ and also $(1, \dots, 1) \in \mathcal{C}^\perp.$
	
\vspace{0.2cm}
		
\section{Constructions of known lattices via $C^\star$} \label{Sec:4}

	We can only expect to have interesting lattice constellations via Construction $C^\star$ following the procedure described in Section \ref{Sec:3-level} for $n$ even, because we need to assure that $(1,\dots, 1) \in \mathcal{C}_2 \subseteq \mathcal{S}_{3}(0, \dots,0) = \tilde{\mathcal{C}}_{3}.$ 
	
	This section summarizes some new lattice constructions for even dimensions built from a $3-$level Construction $C^\star$ with the main code $\mathcal{C} \subseteq \mathbb{F}_2^{3n}$ as in Eq. \eqref{eq3level}, whose resulting constellation is $\Gamma_{C^\star}$ as in Eq. \eqref{lattice_cstar}.  
	
	Observe that an essential feature to calculate the packing efficiency or Hermite constant of a lattice is the minimum distance. A closed formula for the minimum distance of a constellation generated by Construction $C^\star$ is still an open problem and in general, what is known is just an upper and lower bound for it \cite{bzc2019}. However, for particular cases, when the codes are established, as it is the case of the examples explored in this section, this calculation can be done by brute force, i.e., by investigating all possible minimum weight codewords and calculating the minimum among them. 
	
\vspace{0.2cm}
	
\noindent \textbf{Dimension 8 - $E_{8}$ lattice:} Define $\mathcal{C}_2$ as the $[8,4,4]-$extended Hamming code, which is self-dual and whose basis vectors are displayed in the rows of the following generator matrix,
\begin{equation}
G=\left[ \begin{array}{@{}*{16}{c}@{}}
1 & 1 & 1 & 1 & 1 & 1 & 1 & 1 \\
0 & 1 & 0 & 1 & 0 & 1 & 0 & 1 \\
0 & 0 & 1 & 1 & 0 & 0 & 1 & 1 \\
0 & 0 & 0 & 0 & 1 & 1 & 1 & 1
\end{array}\right].
\end{equation}

	One can notice that the minimum distance of $\mathcal{C}_1$ is $8,$ of $\mathcal{C}_2$ is $4,$ and of $\tilde{\mathcal{C} }_3$ and $\overline{\mathcal{C}}_3$ is $2.$ Then, because of the dependence created by the main code $\mathcal{C}$ (Eq. \eqref{eq3level}), in order to calculate the squared minimum distance of $\Gamma_{C^\star},$ we may consider the combinations of codewords that yields in the minimum, i.e.,
\begin{align*}
d_{\min}^2(\Gamma_{C^\star}) = & \min \{3^2 + 7 , 2^2d_{H}(\mathcal{C}_2),  2^4d_{H}(\tilde{\mathcal{C}_3}),  \\
& 2^4d_{H}(\overline{\mathcal{C}_3})\} \\
= & \min \{16, 2^2 \cdot 4, 2^4 \cdot 2, 2^4 \cdot 2 \} =16,
\end{align*}
where the term $3^2+7$ refers to the squared minimum distance of points that have distinct codewords in the first level.  Therefore $d_{\min}(\Gamma_{C^\star})=4.$ Here, $d_{H}$ denotes the minimum Hamming weight of the respective code. Hence, the packing density of this construction is calculated by
\vspace{-0.1cm}
\begin{align}
\Delta(\Gamma_{C^\star})=  &~\dfrac{|\mathcal{C}|~vol(\mathcal{B}_8(0, \tfrac{d_{\min}}{2}))}{2^{3n}} = \dfrac{2 \cdot 2^4 \cdot 2^{7}}{2^{24}} \dfrac{\pi^4}{4!} 2^{8}  \nonumber \\
\approx  &~ 0.25367,
\end{align}
which coincides with the packing density of the $E_8$ lattice and $E_8 = \tfrac{1}{\sqrt{8}} \Gamma_{C^\star}.$ This construction is just to illustrate that one can achieve the same packing density as $E_8$ lattice via  Construction $C^\star,$ although the most efficient way of representing this lattice is via Construction A.

\vspace{0.2cm}

\noindent \textbf{Dimension 14:} Consider $\mathcal{C}_2$ as the self-dual code $[14,7,4].$ Thus,
\begin{align*}
d_{\min}^2(\Gamma_{C^\star}) = & \min \{9+13, 2^2 \cdot 4, 2^4 \cdot 2\} =16,
\end{align*}
whose Hermite constant is
\begin{align}
\gamma_{14}(\Gamma_{C^\star}) & = \dfrac{d_{\min}^{2}(\Gamma_{C^\star})}{vol(\Gamma_{C^\star})^{2/n}} = \dfrac{16}{(2^{21})^{{2}/{14}}} = 2.
\end{align}
The upper bound for the Hermite constant in this dimension is $2.4886,$ according to \cite{Cohn2003}.

\vspace{0.2cm}

	In dimension $16,$ the best known packing density is given by the decoupled version of Eq.\eqref{constrcstar}, where $\mathcal{C}_2=\mathcal{RM}(2,4),$ where $\mathcal{RM}(r,m)$ denotes the Reed-Muller code of length $2^m$ and order $r.$ In this particular case, Construction C, D and $C^\star$ coincides. 


\vspace{0.2cm}

\noindent \textbf{Dimension 24:} (Leech lattice) This construction was already presented in \cite{bzc2018, bzc2019} and it assumes $\mathcal{C}_2$ as the $[24,12,8]-$extended Golay code. 

\vspace{0.2cm}

\noindent \textbf{Dimension 32:} Define $\mathcal{C}_2$ as the $\mathcal{RM}(2,5),$ which is a $[32,16,8]-$ self-dual code. Then, we have that, following an analogous calculation for the minimum distance as it was done in the $E_8$ case,
\begin{align*}
d_{\min}^2(\Gamma_{C^\star}) = & \min \{9+31, 2^2 \cdot 8, 2^4 \cdot 2\} =32.
\end{align*}
Hence, the Hermite constant is 
\begin{align}
\gamma_{32}(\Gamma_{C^\star}) & = \dfrac{d_{\min}^{2}(\Gamma_{C^\star})}{vol(\Gamma_{C^\star})^{2/n}} = \dfrac{32}{(2^{48})^{{2}/{32}}} = 4,
\end{align}
which coincides with Hermite constant of the Barnes-Wall lattice $BW_{32}.$


\vspace{0.2cm}
	
\noindent \textbf{Dimension 40:} Define $\mathcal{C}_2$ as an extremal self-dual $[40,20,8]-$code, i.e., its minimum distance achieves the highest possible value for given $k$ and $n.$ The squared minimum distance is given by
\begin{align*}
d_{\min}^2(\Gamma_{C^\star}) = & \min \{9+39, 2^2 \cdot 8, 2^4 \cdot 2\} =32.
\end{align*}
The Hermite constant of this lattice constellation is
\begin{align}
\gamma_{40}(\Gamma_{C^\star}) & = \dfrac{d_{\min}^{2}(\Gamma_{C^\star})}{vol(\Gamma_{C^\star})^{2/n}} = \dfrac{32}{(2^{60})^{{2}/{40}}} = 4,
 \end{align} 
 which coincides the Hermite constant given by the extremal even unimodular lattice in dimension $40.$
 
\vspace{0.2cm}
 
	

\section{Special non-lattice constellations} \label{Sec:5}

	One can notice that the scheme proposed for Construction $C^\star$ may be also used to get non-lattice constellations when the code $\mathcal{C}_2$ is not self-orthogonal or it is but does not contain the codeword $(1,\dots, 1).$
	
\vspace{0.2cm}
	
\noindent \textbf{Dimension 4:} It is believed that the best known packing density for any constellation in dimension $n=4$ is given by the lattice $D_4$ \cite{CS95, conwaysloane}, which is, up to congruence, the unique lattice that achieves this density. In the sequel, we present a non-lattice constellation that achieves the same packing density as $D_4.$


	
  	We consider $\mathcal{C}_1$ and $\mathcal{C}_3$ as the coupled codes according to Section \ref{Sec:3-level}, and $\mathcal{C}_2$ is the $\mathcal{RM}(1,2)$ $[4,3,2]-$code, i.e.,
\begin{align*}
\mathcal{RM}(1,2)& = \{(0,0,0,0),(1,0,1,0),(0,1,0,1),(1,1,1,1),\\
& (0,0,1,1),(0,1,1,0),(1,0,0,1),(1,1,0,0)\},
\end{align*}
we can see that this code is not self-orthogonal. Moreover, if we apply a Construction $C^\star$ as proposed in Eq. \eqref{lattice_cstar}, it does not give a lattice. Indeed, consider $(4,6,0,2), (4,4,2,2) \in \Gamma_{\mathcal{C}^\star}.$ Their real sum is $(8,10,2,4) = (0,0,0,0)+2(0,1,1,0)+4(0,0,0,1)+8(1,1,0,0)$ and $(0,0,0,0,0,1,1,0,0,0,0,1) \notin \mathcal{C} \subseteq \mathbb{F}_2^{12}.$ 
~When we calculate the squared minimum distance of this constellation, we have that
\begin{equation*}
d_{\min}^2(\Gamma_{C^\star}) = \min \{9+3, 2^2 \cdot 2, 2^4 \cdot 2\} = \min\{12, 8, 32\} = 8
\end{equation*}
and $d_{\min}(\Gamma_{C^\star})=2\sqrt{2}.$ The packing density of this construction is then 
\begin{align*}
\Delta(\Gamma_{C^\star}) & = \dfrac{|\mathcal{C}|~vol(\mathcal{B}_4(0, \tfrac{d_{\min}}{2}))}{2^{3n}} = \dfrac{2 \cdot 2^3 \cdot 2^{3}}{2^{12}} \dfrac{\pi^2}{2!} (\sqrt{2})^{4} \\
& = \dfrac{\pi^2}{2^4} \approx 0.6168...
\end{align*}
which is the same packing density as the $D_4$ lattice.\\

	Other interesting non-lattice cases obtained by an analogous construction are the following:\\
	
\noindent \textbf{Dimension 18:} Considering $\mathcal{C}_2$ to be the $[18,9,6]-$binary linear code \cite{Simonis1992}, the resulting constellation achieves the best known Hermite constant in this dimension \cite{Cohn2003}. 

\vspace{0.2cm}

\noindent \textbf{Dimension 20:} The best sphere packing in dimension $20$ is presented in the work of Vardy \cite{Vardy95} and it can be seen as a Construction $C^\star,$ where the three levels are coupled.

\vspace{0.2cm}

\noindent \textbf{Dimension 40:} By assuming $\mathcal{C}_2$ as the $[40,23,8]-$binary linear code \cite[p. 146]{conwaysloane}, we can slightly improve the Hermite constant of the lattice presented in Section \ref{Sec:4} in dimension $40,$ which in this case reaches $\gamma_{40}= 4.287.$

\vspace{0.2cm}


\section{Conclusion and future work} \label{Sec:Conc}

	We detailed some lattice constructions under the perspective of a special scheme of Construction $C^\star,$ using coupled first and third levels and admitting as second level self-dual codes. This construction is only interesting for low dimensions, because the choice of the most significant bit code (third level) forces an upper bound for the squared minimum distance equal to $32,$ which does not depend on the dimension. This drawback may be solved by applying Construction $C^\star$ to other families of coupled codes or by increasing the number of levels.
	
	We also presented non-lattice constructions, including a four dimensional Construction $C^\star$ that achieves the same packing density as the $D_4$ lattice and interesting potentially interesting results for dimensions $18, 20,$ and $40.$ We aim in a future work to apply other self-dual codes to Construction $C^\star,$ also with different alphabet sizes, and compare it with known results for Construction A \cite{Nebe2006}. 
	
	In terms of efficient decoding, the idea is to generalize the bounded-distance decoding scheme for the Leech lattice proposed by Forney \cite{forney2} to any $3-$level lattice Construction $C^\star$ built according the structure proposed by this paper.  
	
\vspace{0.2cm}
	
\section*{Acknowledgment}

The authors would like to thank Joseph J. Boutros for fruitful discussions and also the reviewers for meaningful suggestions. SIRC was supported by CNPq (313326/2017-7) and FAPESP (2013/25977-7) Foundations, and RZ was supported by Israel Science Foundation (676/15).

\vspace{0.2cm}

\end{document}